\newcommand{\uline}[1]{\rule[0pt]{#1}{0.4pt}}
\newcommand{\mait}[1]{$\mathit{#1}$~}
\newcommand{\percentage}{\mait{/battery\_percentage}}
\newcommand{\status}{\mait{/battery\_status}}
\newcommand{\statChange}{\mait{/status\_change}}
\newcommand{\inputAcc}{\mait{/input\_accepted}}
\newcommand{\statAcc}{\mait{/status\_accepted}}
\newcommand{\led}{\mait{/SetLED}}
\newcommand{\ledmon}{\mait{/SetLED\_mon}}
\newcommand{\ledtopic}{\mait{/LED\_Panel}}
\newcommand{\maitp}[1]{$\mathit{#1}$}
\newcommand{\percentagep}{\maitp{/battery\_percentage}\xspace}
\newcommand{\statusp}{\maitp{/battery\_status}\xspace}
\newcommand{\statChangep}{\maitp{/status\_change}\xspace}
\newcommand{\inputAccp}{\maitp{/input\_accepted}\xspace}
\newcommand{\statAccp}{\maitp{/status\_accepted}\xspace}
\newcommand{\ledp}{\maitp{/SetLED}\xspace}
\newcommand{\ledmonp}{\maitp{/SetLED\_mon}\xspace}
\newcommand{\verdictp}{\maitp{/verdict}\xspace}
\newif\ifcomments
\newif\ifshowremoved
\definecolor{lightblue}{RGB}{210,210,225}
\definecolor{lightred}{RGB}{225,210,210}
\definecolor{lightgreen}{RGB}{210,225,210}
\definecolor{lightyellow}{RGB}{225,222,200}
\definecolor{lightpurple}{RGB}{225,210,225}
\definecolor{warningyellow}{RGB}{247, 245, 187}
\definecolor{darkergreen}{RGB}{0,64,0}
\definecolor{darkred}{RGB}{128,0,0}
\definecolor{darkblue}{RGB}{0,0,139}
\definecolor{darkgreen}{RGB}{0,128,0}
\definecolor{darkpurple}{RGB}{128,0,128}
\definecolor{warningorange}{RGB}{124, 81, 0}
\definecolor{eyecancerpink}{rgb}{1.0, 0.0, 1.0}
\definecolor{radiationyellow}{rgb}{0.8, 1.0, 0.0}
\newcommand{\colorpar}[3]{\colorbox{#1}{\parbox{#2}{#3}}}
\newcommand{\marginremark}[4]{\marginpar{\colorpar{#2}{\linewidth}{\color{#1}\tiny{[#3]~ #4}}}}
\def\THICKhrulefill{\leavevmode \leaders \hrule height 5pt\hfill \kern \z@}
\newcommand{\marginAF}[1]
{\ifcomments\marginremark{darkblue}{lightblue}{AF}{#1}\fi}
\newcommand{\marginLD}[1]{}
\newcommand{\marginMGS}[1]{}
\newcommand{\marginFMASMGS}[1]{\ifcomments\marginremark{darkgreen}{lightgreen}{MGS}{#1}\fi}
\newcommand{\inlineLD}[1]{}
\title{ROSMonitoring 2.0: Extending ROS Runtime Verification to Services and Ordered Topics}
\author{Maryam Ghaffari Saadat
\institute{University of Manchester \\ Manchester, United Kingdom}
\email{maryam.ghaffarisaadat@manchester.ac.uk}
\and
Angelo Ferrando
\institute{University of Modena and Reggio Emilia \\ Modena, Italy}
\email{angelo.ferrando@unimore.it}
\and
Louise A. Dennis
\institute{University of Manchester \\ Manchester, United Kingdom}
\email{louise.dennis@manchester.ac.uk}
\and
Michael Fisher
\institute{University of Manchester \\ Manchester, United Kingdom}
\email{michael.fisher@manchester.ac.uk}
}
\begin{document}
\maketitle

\begin{abstract}
Formal verification of robotic applications presents challenges due to their hybrid nature and distributed architecture. This paper introduces ROSMonitoring 2.0, an extension of ROSMonitoring designed to facilitate the monitoring of both topics and services while considering the order in which messages are published and received. The framework has been enhanced to support these novel features for ROS1 -- and partially ROS2 environments -- offering improved real-time support, security, scalability, and interoperability. We discuss the modifications made to accommodate these advancements and present results obtained from a case study involving the runtime monitoring of specific components of a fire-fighting Uncrewed Aerial Vehicle (UAV).
\marginFMASMGS{Made the use of ROS1 consistent in main text (ROS was changed to ROS1 where it wasn't referring to both versions).}
\end{abstract}

\newtheorem{assumption}{Assumption}
\newtheorem{theorem}{Theorem}
\newtheorem{lemma}{Lemma}

\section{Introduction}


The formal verification of robotic applications is a challenging task. Due to their heterogeneous and component-based nature\marginLD{I'm not sure inherent distribution will mean much to the average FM person}\marginMGS{Not sure how to address this.}, establishing the correctness of robotic systems can be particularly difficult. Various approaches exist to tackle this problem, ranging from testing methods~\cite{DBLP:conf/irc/Breitenhuber20,DBLP:journals/jrie/KanterV20,DBLP:journals/sqj/BritoSS22} to static~\cite{DBLP:conf/icfem/FoughaliBDIM16,DBLP:conf/rv/DesaiDS17} or dynamic~\cite{DBLP:conf/rv/HuangEZMLSR14,DBLP:conf/taros/FerrandoC0AFM20} formal verification.
In this work, we focus on the latter approach to verification, specifically the extension of ROSMonitoring~\cite{DBLP:conf/taros/FerrandoC0AFM20}, a Runtime Verification (RV) framework developed for monitoring robotic systems deployed in the Robot Operating System (ROS)~\cite{ros}. ROS is widely used, providing a \emph{de facto} standard for robotic components. ROS encourages component-based development of robotic systems where individual components run in parallel, may be distributed across several processors, and communicate via messages. We tackle ROSMonitoring because it is a novel, formalism-agnostic, and widely used framework for the runtime monitoring of ROS applications. ROSMonitoring allows the specification of formal properties externally to ROS, without imposing any constraints on the formalism to be used. The properties that can typically be monitored in ROSMonitoring concern messages exchanged between different ROS components, called nodes. Such message communication is achieved through a publish-subscribe mechanism, where some nodes (referred to as \emph{publishers}) publish messages on a topic and other nodes (referred to as \emph{subscribers}) subscribe to these topics to listen for the published messages. Through ROSMonitoring, it is possible to specify the communication flow on such topics. For instance, one can determine which messages are allowed in the current state of the system, the correct order amongst them, and other relevant criteria.

Unfortunately, not all aspects of verifying ROS applications are based solely on message communication. In fact, when developing ROS systems, other communication mechanisms can also be utilised, such as \emph{services}. Unlike the publish-subscribe mechanism used with topics, services provide a way for nodes in ROS to directly offer functionalities to each other. While topics are typically used to transmit data from sensors, services serve as an interface that enables nodes to offer specific functionalities to others within the ROS system. Unlike topics, services are commonly synchronous, meaning that when a node calls a service, it waits for a response from the receiving node. This is in contrast to topics, where subscription is non-blocking, and the subscriber node is simply notified whenever a new message is published on the topic, without any waiting involved.
Services are not supported in ROSMonitoring, restricting the framework's functionality to solely monitoring messages.

Another current limitation of ROSMonitoring pertains to the handling of message order. The framework orders messages based on the chronological order in which they are received by subscriber nodes. However, this approach only considers the viewpoint of subscribers, which may not always be suitable. In some scenarios, it may be necessary to consider the order of messages based on when they were sent. For instance, if one message was sent before another, the former should be analysed before the latter by the monitor, appearing earlier in the resulting trace of events. Unfortunately, ROSMonitoring does not currently provide a representation of the order in which messages are published and received. Generally, messages on a single topic are received by subscribers in the order they were published. However, if a property needs to monitor several topics, then it is unusual for the messages from more than one topic to be received in the order they were published. Reordering messages according to publication time is necessary if checking conditional actions that respond to specific event patterns. \marginFMASMGS{added last sentence} \marginLD{I would highlight at this point that, in general, messages on a single topic are received in the order they were sent, but if a property needs to monitor several topics then it is unusual for the messages from more than one topic to be received in the order they were sent.}\marginMGS{Added, thanks!}

In this paper, we introduce ROSMonitoring 2.0, an extension of ROSMonitoring designed to facilitate the monitoring of both topics and services while also considering the order in which messages are published and received. The framework has been enhanced to support these novel features for ROS. Some of these features, \textit{i.e.}, service monitoring, have also been ported to ROS2\footnote{ROS2 is the upgraded version of ROS1, providing improved real-time support, security, scalability, and enhanced interoperability with multiple communication middleware options.} environment as well. We discuss the modifications made to accommodate these advancements and present results obtained from a case study involving the runtime monitoring of specific components of a fire-fighting Uncrewed Aerial Vehicle (UAV).

\section{Preliminaries}


In this section, we briefly introduce Runtime Verification and the ROSMonitoring framework. We emphasise the primary distinction between RV and static verification techniques. Additionally, we provide an overview of the ROSMonitoring framework and briefly outline its main features.

\subsection{Runtime Verification}

Runtime Verification (RV) is a lightweight formal verification technique that checks the behaviour of a system while it is running~\cite{DBLP:journals/jlp/LeuckerS09}. Unlike model checking, RV does not suffer from the state space explosion problem typical in static verification methods and is therefore much more scalable~\cite{DBLP:books/daglib/0007403}. RV is particularly suitable for robotic applications due to resource limitations and system complexity that make full verification at design-time challenging. While static verification techniques focus on abstracting system components, RV checks system behaviour directly. RV addresses the word inclusion problem~\cite{DBLP:journals/scp/AnconaFFM21}, determining if a given event trace belongs to the set of traces denoted by a formal property (referred to as the property's language). This verification process is polynomial in time relative to the trace length. In contrast, model checking exhaustively verifies if a system satisfies or violates a property by analysing all possible system executions, tackling the language inclusion problem, and is typically PSPACE-complete for non-deterministic finite automata~\cite{DBLP:journals/jacm/SistlaC85}. RV commonly employs runtime monitors, automatically synthesised from formal properties, often expressed using Linear-time Temporal Logic (LTL)~\cite{DBLP:conf/focs/Pnueli77}. These monitors gather information from system execution traces and conclude whether the system satisfies or violates the property. A monitor returns $\top$ if the trace satisfies the property, $\bot$ if it violates it, and $?$ if there is insufficient information. Depending on the property's formalism, $?$ may further split into $?_\top$ or $?_\bot$ indicating partial satisfaction or partial violation, respectively.

\subsection{ROSMonitoring}

ROSMonitoring~\cite{DBLP:conf/taros/FerrandoC0AFM20} is a framework for performing RV on ROS applications. ROSMonitoring allows the user to add monitors to ROS applications, which intercept the messages exchanged between components, called ``ROS nodes''\footnote{ROS is node-based, each robot can be composed of multiple nodes.}, and check whether the relevant messages conform to a given formal property. In the following we describe these three different aspects in more detail.

\tikzset{every picture/.style={line width=0.75pt}} 

\begin{figure}[ht]
\centering

\scalebox{0.9}{

\begin{tikzpicture}[x=0.75pt,y=0.75pt,yscale=-1,xscale=1]

\draw   (90,84) -- (172,84) -- (172,124) -- (90,124) -- cycle ;
\draw  [fill={rgb, 255:red, 224; green, 214; blue, 123 }  ,fill opacity=1 ] (46.1,68) -- (23,68) -- (23,22) -- (56,22) -- (56,58.1) -- cycle -- (46.1,68) ; \draw   (56,58.1) -- (48.08,60.08) -- (46.1,68) ;
\draw    (55,50) .. controls (82.3,48.05) and (69.67,66.06) .. (88.49,82.72) ;
\draw [shift={(90,84)}, rotate = 218.99] [fill={rgb, 255:red, 0; green, 0; blue, 0 }  ][line width=0.75]  [draw opacity=0] (8.93,-4.29) -- (0,0) -- (8.93,4.29) -- cycle    ;


\draw  [fill={rgb, 255:red, 108; green, 154; blue, 209 }  ,fill opacity=1 ] (317.1,127) -- (294,127) -- (294,81) -- (327,81) -- (327,117.1) -- cycle -- (317.1,127) ; \draw   (327,117.1) -- (319.08,119.08) -- (317.1,127) ;
\draw  [fill={rgb, 255:red, 108; green, 154; blue, 209 }  ,fill opacity=1 ] (337.1,147) -- (314,147) -- (314,101) -- (347,101) -- (347,137.1) -- cycle -- (337.1,147) ; \draw   (347,137.1) -- (339.08,139.08) -- (337.1,147) ;
\draw  [fill={rgb, 255:red, 108; green, 154; blue, 209 }  ,fill opacity=1 ] (354.1,175) -- (331,175) -- (331,129) -- (364,129) -- (364,165.1) -- cycle -- (354.1,175) ; \draw   (364,165.1) -- (356.08,167.08) -- (354.1,175) ;
\draw    (172,104) .. controls (199.36,118.93) and (224.25,141.77) .. (302.32,139.04) ;
\draw [shift={(303.5,139)}, rotate = 537.8299999999999] [fill={rgb, 255:red, 0; green, 0; blue, 0 }  ][line width=0.75]  [draw opacity=0] (8.93,-4.29) -- (0,0) -- (8.93,4.29) -- cycle    ;

\draw  [fill={rgb, 255:red, 108; green, 154; blue, 209 }  ,fill opacity=1 ] (304.1,236) -- (281,236) -- (281,190) -- (314,190) -- (314,226.1) -- cycle -- (304.1,236) ; \draw   (314,226.1) -- (306.08,228.08) -- (304.1,236) ;
\draw    (172,104) .. controls (197.37,116.94) and (178.69,193.23) .. (279.47,190.05) ;
\draw [shift={(281,190)}, rotate = 537.77] [fill={rgb, 255:red, 0; green, 0; blue, 0 }  ][line width=0.75]  [draw opacity=0] (8.93,-4.29) -- (0,0) -- (8.93,4.29) -- cycle    ;

\draw  [dash pattern={on 4.5pt off 4.5pt}] (230,34.5) -- (417.5,34.5) -- (417.5,263) -- (230,263) -- cycle ;

\draw    (315.5,201) .. controls (364.01,214.86) and (381.16,123.85) .. (444.56,121.06) ;
\draw [shift={(446.5,121)}, rotate = 539.12] [fill={rgb, 255:red, 0; green, 0; blue, 0 }  ][line width=0.75]  [draw opacity=0] (8.93,-4.29) -- (0,0) -- (8.93,4.29) -- cycle    ;

\draw  [fill={rgb, 255:red, 194; green, 108; blue, 214 }  ,fill opacity=1 ] (469.1,133) -- (446,133) -- (446,87) -- (479,87) -- (479,123.1) -- cycle -- (469.1,133) ; \draw   (479,123.1) -- (471.08,125.08) -- (469.1,133) ;
\draw   (467,192.5) .. controls (467,179.52) and (490.95,169) .. (520.5,169) .. controls (550.05,169) and (574,179.52) .. (574,192.5) .. controls (574,205.48) and (550.05,216) .. (520.5,216) .. controls (490.95,216) and (467,205.48) .. (467,192.5) -- cycle ;
\draw    (316.53,215.81) .. controls (382.37,241.21) and (409.05,167.76) .. (465.29,191.74) ;
\draw [shift={(467,192.5)}, rotate = 204.56] [fill={rgb, 255:red, 0; green, 0; blue, 0 }  ][line width=0.75]  [draw opacity=0] (8.93,-4.29) -- (0,0) -- (8.93,4.29) -- cycle    ;
\draw [shift={(314.5,215)}, rotate = 22.38] [fill={rgb, 255:red, 0; green, 0; blue, 0 }  ][line width=0.75]  [draw opacity=0] (8.93,-4.29) -- (0,0) -- (8.93,4.29) -- cycle    ;
\draw  [fill={rgb, 255:red, 119; green, 221; blue, 197 }  ,fill opacity=1 ] (458.1,274) -- (435,274) -- (435,228) -- (468,228) -- (468,264.1) -- cycle -- (458.1,274) ; \draw   (468,264.1) -- (460.08,266.08) -- (458.1,274) ;
\draw    (520.41,218.31) .. controls (519.83,242.69) and (531.75,257.34) .. (468.93,246.34) ;
\draw [shift={(467,246)}, rotate = 370.15] [fill={rgb, 255:red, 0; green, 0; blue, 0 }  ][line width=0.75]  [draw opacity=0] (8.93,-4.29) -- (0,0) -- (8.93,4.29) -- cycle    ;
\draw [shift={(520.5,216)}, rotate = 93.3] [fill={rgb, 255:red, 0; green, 0; blue, 0 }  ][line width=0.75]  [draw opacity=0] (8.93,-4.29) -- (0,0) -- (8.93,4.29) -- cycle    ;

\draw    (479,109) .. controls (527.02,122.72) and (471.32,147) .. (487.89,170.56) ;
\draw [shift={(489,172)}, rotate = 230.19] [fill={rgb, 255:red, 0; green, 0; blue, 0 }  ][line width=0.75]  [draw opacity=0] (8.93,-4.29) -- (0,0) -- (8.93,4.29) -- cycle    ;

\draw    (298.85,186.06) .. controls (285.58,145.79) and (303.42,154.29) .. (319.73,153.16) ;
\draw [shift={(321.5,153)}, rotate = 533.29] [fill={rgb, 255:red, 0; green, 0; blue, 0 }  ][line width=0.75]  [draw opacity=0] (8.93,-4.29) -- (0,0) -- (8.93,4.29) -- cycle    ;
\draw [shift={(299.5,188)}, rotate = 251.18] [fill={rgb, 255:red, 0; green, 0; blue, 0 }  ][line width=0.75]  [draw opacity=0] (8.93,-4.29) -- (0,0) -- (8.93,4.29) -- cycle    ;

\draw (131,104) node  [scale=1.3] [align=center] {instrument};
\draw (40,9) node [scale=1.5] [align=left] {config.yaml};
\draw (318,62) node [scale=1.5] [align=center] {nodes};
\draw (296,249) node [scale=1.5] [align=left] {monitor.py};
\draw (398,22) node  [scale=1.5] [align=left] {ROS};
\draw (462,74) node [scale=1.5] [align=left] {log.txt};
\draw (520.5,192.5) node  [scale=1.3] [align=center] {oracle};
\draw (450,215) node [scale=1.5] [align=left] {spec};
\draw (442,177) node [scale=1.5] [align=left] {online};
\draw (515,151) node [scale=1.5] [align=left] {offline};

\end{tikzpicture}

}
\caption{High-level overview of ROSMonitoring~\cite{DBLP:conf/taros/FerrandoC0AFM20}.}
\label{fig:rosmon-pipeline}
\end{figure}


\subsubsection{Instrumentation}

ROSMonitoring starts with a YAML configuration file to guide the instrumentation process required to generate the monitors. Within this file, the user can specify the communication channels, called ``ROS topics'', to be intercepted by each monitor. In particular, the user indicates the name of the topic, the ROS message type expected in that topic, and the type of action that the monitor should perform. After preferences have been configured in \emph{config.yaml}, the last step is to run the generator script to automatically generate the monitors and instrument the required ROS launch files.

\subsubsection{Oracle}
\label{sec:oracle}
ROSMonitoring decouples the message interception (monitor) and the formal verification aspects (oracle) and so is highly customizable. Different formalisms can be used to represent the properties to be verified, including Past MTL, Past STL, and Past LTL (MTL~\cite{DBLP:journals/rts/Koymans90}, STL~\cite{DBLP:conf/formats/MalerN04}, and LTL~\cite{DBLP:conf/focs/Pnueli77} with past-time operators, respectively). Using the formalism of choice, an external entity can be created to handle the trace of events reported by the monitors in ROS (generated through instrumentation). ROSMonitoring requires very few constraints for adding a new oracle. It uses JSON\footnote{\url{https://www.json.org/}} (JavaScript Object Notation) as a data-interchange format for serialising the messages that are observed by the ROS monitor. JSON is commonly used for transmitting data between a server and a web application. In JSON, data is represented as key-value pairs enclosed in curly braces, making it a popular choice for APIs and data storage. An oracle will parse the JSON messages, check whether they satisfy or violate the formal property, and report back to the ROS monitor.

\subsubsection{ROS monitor}

The instrumentation process generates monitors to intercept the messages of interest. Each monitor is automatically generated as a ROS node in Python, which is a native language supported in ROS. ROSMonitoring provides two types of monitors: 1) offline monitors which simply log the intercepted events in a specified file to be parsed by the Oracle later to determine whether they satisfy a given set of properties, and 2) online monitors which query the Oracle in real time about whether the intercepted messages satisfy the given properties. While offline monitors only log the observed messages, online monitors could either log messages along with the Oracle verdict updated after each message or filter messages that the Oracle deems have violated the given properties. To clarify the difference, in the case of logging without filtering, if the online monitor finds a violation of the property under analysis, it publishes a warning message containing as much information as possible about the violated property. This warning message can be used by the system to handle the violation and to react appropriately. However, the monitor does not stop the message from propagating further in the system. In contrast, if filtering is enabled, since monitors can be placed between the communication of different nodes, ROSMonitoring monitors enforces the property under analysis by not propagating messages that represent a property violation. This is achieved by directing communication on the monitored topics to pass through the monitors. \marginFMASMGS{added explanation about offline and online monitors and their actions.}

\section{Motivating example}

In this section, we explain the rationale behind extending the ROSMonitoring framework. We use, as an example, a Battery Supervisor system \footnote{Full code for this example is available in the \href{https://github.com/LilithMary/ROSMonitoring2.0-Case-Study.git}{Git Repository for ROSMonitoring 2.0 Case Study}.} designed for a UAV (Uncrewed Aerial Vehicle) with three essential components depicted in Figure \ref{fig:example}: the Battery, the Battery Supervisor, and the LED Panel. The Battery periodically reports the remaining battery percentage. The Battery Supervisor is responsible for checking the battery level and reporting its status. It subscribes to the battery percentage updates and analyses them. If the battery percentage is above 40\%, it signals a `healthy' status. If it is between 30\% and 40\%, it flags a `warning' status. And if it falls below 30\%, it indicates a `critical' status. The LED Panel reflects the battery status through coloured LED lights. The Battery Supervisor is connected to the LED Panel and whenever it detects a change in the battery status, it sends a signal to the LED Panel to adjust the lights accordingly. For instance, if the battery is in a critical state, the red light might flash to indicate urgency\footnote{This case study was inspired by the example for RS services in the \href{https://roboticsbackend.com/what-is-a-ros-service/}{Robotics Back-End Tutorial} as well as a solution to Challenge 3 of the MBZIRC Challenge competition 2020~\cite{mbzirc2020}.}.\marginLD{Even though we are a long way from the original version of this system, it might be useful to mention this is inspired by a solution to Challenge 3 of the MBZIRC Challenge competition 2020 - it implies there is a real use case driving the developments and example.}\marginMGS{Added footnote.} \marginFMASMGS{Added link to Git repository for case study (and added description to the repository). Added link to example which inspired our case study.}

\begin{figure}[!ht]
    \centering
    \resizebox{.8\textwidth}{!}{
    \begin{tikzpicture}[
     node/.style={rectangle, rounded corners=3mm, draw, minimum size=1cm, inner sep=10pt},
    topic/.style={draw, minimum width=1.5cm, minimum height=.5cm, inner sep=2pt},
    arrow/.style={-Stealth, thick},
    dashedarrow/.style={-Stealth, dashed, thick},
    doublearrow/.style={thick, >=Stealth, <->}
    ]

    \node[node] at (0,0) (battery) {Battery};
    \node[topic] at (0, -2) (percentage) {\textbackslash battery\uline{0.2cm}percentage};
    \node[node] at (5, -2)  (batterymon) {Battery Supervisor};
    \node[topic] at (10,-2) (batterystat) {\textbackslash battery\uline{0.2cm}status};
    \node[node] at (5,0) (ledpanel) {LED Panel};
    \node[topic] at (10,0) (ledpaneltopic) {\textbackslash LED\uline{0.2cm}Panel};


    \draw[arrow] (batterymon) -- (batterystat) node[midway, above] {};
    \draw[arrow] (battery) -- (percentage) node[midway, above] {};
    \draw[arrow] (percentage) -- (batterymon) node[midway, above] {};
    \draw[arrow] (ledpanel) -- (ledpaneltopic) node[midway, above] {};
    \draw[dashedarrow] (batterymon) -- node[midway, left] {\textbackslash SetLED invocation} (ledpanel);

\end{tikzpicture}
    }
    \captionof{figure}{Motivating example with three components: Battery, Battery Supervisor, and LED Panel. Battery publishes on topic \percentagep; Battery Supervisor subscribes to \percentagep, publishes on topic \statusp, and invokes service \ledp; LED Panel publishes on topic \ledtopic and responds to \led service requests.}
    \label{fig:example}
\end{figure}


In this example, we are interested in ensuring that the messages exchanged between different components correspond correctly. For instance, that every status update provided by the Battery Supervisor accurately reflects the current battery percentage received from the Battery. However, while messages published on a single topic generally arrive at the subscribers according to their publication order, messages on different topics can arrive out of order. As a result, a battery status message could be observed before its corresponding battery percentage. Therefore, we need additional mechanisms to account for this before sending the messages to the Oracle for verification. This motivates our extension to reorder the messages according to their publication time. \marginLD{Stress here that the issue is that these messages are published on different topics.}\marginMGS{Changed the wording at the beginning of the paragraph to address this.}

Beyond message correspondence, we are also interested in verifying the interaction between the Battery Supervisor and the LED Panel service. We would like to confirm that every time the LED lights are adjusted based on the battery status, it is triggered by a legitimate status update. Conversely, we would like to check that every change in battery status is promptly followed by a request to adjust the LED lights, maintaining synchronisation between the visual feedback and the actual battery condition. However, services are not supported by the ROSMonitoring framework. This motivates our extension to support services. Since some of the properties we are interested in monitoring include both topics and services, we have also developed support for reordering service requests and responses according to publication time as well.

\section{ROSMonitoring 2.0}

ROSMonitoring 2.0 is fully available\footnote{\url{https://github.com/autonomy-and-verification-uol/ROSMonitoring/tree/master}} for ROS1, while only partially available\footnote{\url{https://github.com/autonomy-and-verification-uol/ROSMonitoring/tree/ros2}} for ROS2 (service monitoring has been added but not message reordering). In this section, we present two novel aspects of ROSMonitoring 2.0. Firstly, in Section \ref{subsection:service-extension}, we detail the mechanism enabling monitoring of ROS services. Secondly, in Section \ref{subsection:reordering-extension}, we introduce an algorithm for reordering messages based on their publication time, demonstrating its correctness under the assumption that messages on each topic arrive sequentially. Notably, such reordering mechanism is extended to support services in addition to topics.

\subsection{Service extension}
\label{subsection:service-extension}
While ROS topics excel in broadcasting data streams or events asynchronously to multiple nodes, ROS services are designed for synchronous, point-to-point communication to request specific actions or services from other nodes in the system. Consequently, when monitoring services, our monitor node must directly intervene in the communication between the server and client. The monitor node then assumes the role of a server for the client, and conversely acts as a client for the server. The sequence diagram in Figure \ref{fig:rosmon-service-verification} illustrates the service verification process in ROSMonitoring 2.0 where message filtering is enabled (which subsumes the non-filtering scenario). The scenario begins with the Client sending a service request $\mathit{callService(req, res)}$ to the Monitor. Subsequently, the Monitor forwards the request to the Oracle for verification via a callback mechanism specific to the service. If the Oracle identifies the request as inconsistent with the defined property, it responds with a negative verdict (\textit{i.e.} either $?_\bot$ or $\bot$)). In response, the Monitor publishes an error message and notifies the client of the discrepancy, bypassing the service invocation. Conversely, if the Oracle confirms the consistency of the request with the property, it returns a positive verdict (\textit{i.e.}, either $?_\top$ or $\top$). The Monitor proceeds to invoke the service and awaits a response from the server. Upon receiving the response, the Monitor relays it back to the Oracle for evaluation. Should the Oracle determine the response to be erroneous (\textit{i.e.}, the returned verdict is either $?_\bot$ or $\bot$), the Monitor again publishes an error message and notifies the client accordingly. Otherwise, it delivers the response to the client as expected.

In contrast to the standard ROSMonitoring behaviour, handling services necessitates additional verification steps. The Monitor must check both the service request and its corresponding response with the Oracle. Verifying the request is crucial to prevent invoking the service in case of a violation. Moreover, the Monitor must act as an intermediary between the client and server. This mechanism mirrors ROSMonitoring's behaviour when topic filtering is enabled, albeit with an extension in the case of services to invoke the actual service upon successful request verification.


\tikzset{every picture/.style={line width=0.75pt}} 


\begin{figure}[!ht]
\centering

\scalebox{0.7}{
\begin{tikzpicture}[x=0.75pt,y=0.75pt,yscale=-1,xscale=1]

\draw  [dash pattern={on 4.5pt off 4.5pt}]  (720,49) -- (719.5,622) ;
\draw  [dash pattern={on 4.5pt off 4.5pt}]  (473,50) -- (472.5,623) ;
\draw  [dash pattern={on 4.5pt off 4.5pt}]  (237,49) -- (236.5,622) ;
\draw  [dash pattern={on 4.5pt off 4.5pt}]  (75,49) -- (74.5,622) ;
\draw   (425,9) -- (525.5,9) -- (525.5,49) -- (425,49) -- cycle ;
\draw   (671,9) -- (771.5,9) -- (771.5,49) -- (671,49) -- cycle ;
\draw  [fill={rgb, 255:red, 255; green, 255; blue, 255 }  ,fill opacity=1 ] (707,108) -- (732.5,108) -- (732.5,190) -- (707,190) -- cycle ;
\draw  [dash pattern={on 4.5pt off 4.5pt}]  (704.5,133.97) -- (502.5,131.8) ;
\draw [shift={(707.5,134)}, rotate = 180.62] [fill={rgb, 255:red, 0; green, 0; blue, 0 }  ][line width=0.08]  [draw opacity=0] (8.93,-4.29) -- (0,0) -- (8.93,4.29) -- cycle    ;
\draw    (452.5,87.99) -- (88.5,86.8) ;
\draw [shift={(455.5,88)}, rotate = 180.19] [fill={rgb, 255:red, 0; green, 0; blue, 0 }  ][line width=0.08]  [draw opacity=0] (8.93,-4.29) -- (0,0) -- (8.93,4.29) -- cycle    ;
\draw  [fill={rgb, 255:red, 255; green, 255; blue, 255 }  ,fill opacity=1 ] (461,64.8) -- (486.5,64.8) -- (486.5,151) -- (461,151) -- cycle ;
\draw   (26,9) -- (126.5,9) -- (126.5,49) -- (26,49) -- cycle ;
\draw  [fill={rgb, 255:red, 255; green, 255; blue, 255 }  ,fill opacity=1 ] (62,64.8) -- (87.5,64.8) -- (87.5,597) -- (62,597) -- cycle ;
\draw   (188,9) -- (288.5,9) -- (288.5,49) -- (188,49) -- cycle ;
\draw  [fill={rgb, 255:red, 255; green, 255; blue, 255 }  ,fill opacity=1 ] (477,88.8) -- (502.5,88.8) -- (502.5,133.8) -- (477,133.8) -- cycle ;
\draw    (502.5,95.8) -- (530.5,95.8) -- (530.5,125.8) -- (504.5,124.9) ;
\draw [shift={(501.5,124.8)}, rotate = 1.97] [fill={rgb, 255:red, 0; green, 0; blue, 0 }  ][line width=0.08]  [draw opacity=0] (8.93,-4.29) -- (0,0) -- (8.93,4.29) -- cycle    ;

\draw  [dash pattern={on 4.5pt off 4.5pt}]  (706,184) -- (494.5,183.01) ;
\draw [shift={(491.5,183)}, rotate = 0.27] [fill={rgb, 255:red, 0; green, 0; blue, 0 }  ][line width=0.08]  [draw opacity=0] (8.93,-4.29) -- (0,0) -- (8.93,4.29) -- cycle    ;
\draw  [fill={rgb, 255:red, 255; green, 255; blue, 255 }  ,fill opacity=1 ] (461,177.8) -- (486.5,177.8) -- (486.5,401) -- (461,401) -- cycle ;
\draw   (8.13,195.23) -- (660.72,195.23) -- (660.72,379) -- (8.13,379) -- cycle ;
\draw  [dash pattern={on 4.5pt off 4.5pt}]  (660.72,265.37) -- (10.13,265.83) ;
\draw    (9.73,210.76) -- (43.08,210.64) ;
\draw    (43.08,210.64) -- (52.13,201.94) ;
\draw    (52.13,201.94) -- (52.13,195.47) ;
\draw    (459.5,226) -- (95.5,224.81) ;
\draw [shift={(92.5,224.8)}, rotate = 0.19] [fill={rgb, 255:red, 0; green, 0; blue, 0 }  ][line width=0.08]  [draw opacity=0] (8.93,-4.29) -- (0,0) -- (8.93,4.29) -- cycle    ;
\draw  [fill={rgb, 255:red, 255; green, 255; blue, 255 }  ,fill opacity=1 ] (474,206.8) -- (499.5,206.8) -- (499.5,251.8) -- (474,251.8) -- cycle ;
\draw    (499.5,213.8) -- (527.5,213.8) -- (527.5,243.8) -- (501.5,242.9) ;
\draw [shift={(498.5,242.8)}, rotate = 1.97] [fill={rgb, 255:red, 0; green, 0; blue, 0 }  ][line width=0.08]  [draw opacity=0] (8.93,-4.29) -- (0,0) -- (8.93,4.29) -- cycle    ;

\draw    (461.5,299) -- (255.5,299) ;
\draw [shift={(252.5,299)}, rotate = 360] [fill={rgb, 255:red, 0; green, 0; blue, 0 }  ][line width=0.08]  [draw opacity=0] (8.93,-4.29) -- (0,0) -- (8.93,4.29) -- cycle    ;
\draw  [fill={rgb, 255:red, 255; green, 255; blue, 255 }  ,fill opacity=1 ] (224,275) -- (249.5,275) -- (249.5,368.8) -- (224,368.8) -- cycle ;
\draw    (456.5,340) -- (251.5,339.8) ;
\draw [shift={(459.5,340)}, rotate = 180.06] [fill={rgb, 255:red, 0; green, 0; blue, 0 }  ][line width=0.08]  [draw opacity=0] (8.93,-4.29) -- (0,0) -- (8.93,4.29) -- cycle    ;
\draw  [dash pattern={on 4.5pt off 4.5pt}]  (700.5,364) -- (486.5,363.8) ;
\draw [shift={(703.5,364)}, rotate = 180.05] [fill={rgb, 255:red, 0; green, 0; blue, 0 }  ][line width=0.08]  [draw opacity=0] (8.93,-4.29) -- (0,0) -- (8.93,4.29) -- cycle    ;
\draw  [fill={rgb, 255:red, 255; green, 255; blue, 255 }  ,fill opacity=1 ] (709,357) -- (734.5,357) -- (734.5,439) -- (709,439) -- cycle ;
\draw  [dash pattern={on 4.5pt off 4.5pt}]  (709,433) -- (497.5,432.01) ;
\draw [shift={(494.5,432)}, rotate = 0.27] [fill={rgb, 255:red, 0; green, 0; blue, 0 }  ][line width=0.08]  [draw opacity=0] (8.93,-4.29) -- (0,0) -- (8.93,4.29) -- cycle    ;
\draw  [fill={rgb, 255:red, 255; green, 255; blue, 255 }  ,fill opacity=1 ] (461,425.8) -- (486.5,425.8) -- (486.5,597) -- (461,597) -- cycle ;
\draw   (8.13,443.23) -- (660.72,443.23) -- (660.72,575) -- (8.13,575) -- cycle ;
\draw  [dash pattern={on 4.5pt off 4.5pt}]  (660.72,513.37) -- (11.13,512.83) ;
\draw    (8.73,458.76) -- (42.08,458.64) ;
\draw    (42.08,458.64) -- (51.13,450.94) ;
\draw    (51.13,450.94) -- (51.13,444.47) ;
\draw  [fill={rgb, 255:red, 255; green, 255; blue, 255 }  ,fill opacity=1 ] (474,454.8) -- (499.5,454.8) -- (499.5,499.8) -- (474,499.8) -- cycle ;
\draw    (499.5,461.8) -- (527.5,461.8) -- (527.5,491.8) -- (501.5,490.9) ;
\draw [shift={(498.5,490.8)}, rotate = 1.97] [fill={rgb, 255:red, 0; green, 0; blue, 0 }  ][line width=0.08]  [draw opacity=0] (8.93,-4.29) -- (0,0) -- (8.93,4.29) -- cycle    ;

\draw    (459.5,479) -- (95.5,477.81) ;
\draw [shift={(92.5,477.8)}, rotate = 0.19] [fill={rgb, 255:red, 0; green, 0; blue, 0 }  ][line width=0.08]  [draw opacity=0] (8.93,-4.29) -- (0,0) -- (8.93,4.29) -- cycle    ;
\draw    (460.5,546) -- (96.5,544.81) ;
\draw [shift={(93.5,544.8)}, rotate = 0.19] [fill={rgb, 255:red, 0; green, 0; blue, 0 }  ][line width=0.08]  [draw opacity=0] (8.93,-4.29) -- (0,0) -- (8.93,4.29) -- cycle    ;
\draw  [fill={rgb, 255:red, 255; green, 255; blue, 255 }  ,fill opacity=1 ] (724,130.8) -- (749.5,130.8) -- (749.5,175.8) -- (724,175.8) -- cycle ;
\draw    (749.5,137.8) -- (777.5,137.8) -- (777.5,167.8) -- (751.5,166.9) ;
\draw [shift={(748.5,166.8)}, rotate = 1.97] [fill={rgb, 255:red, 0; green, 0; blue, 0 }  ][line width=0.08]  [draw opacity=0] (8.93,-4.29) -- (0,0) -- (8.93,4.29) -- cycle    ;

\draw  [fill={rgb, 255:red, 255; green, 255; blue, 255 }  ,fill opacity=1 ] (728,374.8) -- (753.5,374.8) -- (753.5,419.8) -- (728,419.8) -- cycle ;
\draw    (753.5,381.8) -- (781.5,381.8) -- (781.5,411.8) -- (755.5,410.9) ;
\draw [shift={(752.5,410.8)}, rotate = 1.97] [fill={rgb, 255:red, 0; green, 0; blue, 0 }  ][line width=0.08]  [draw opacity=0] (8.93,-4.29) -- (0,0) -- (8.93,4.29) -- cycle    ;

\draw (475.25,29) node   [align=left] {ROSMonitor};
\draw (721.25,29) node   [align=left] {Oracle};
\draw (76.25,29) node   [align=left] {Client Node};
\draw (238.25,29) node   [align=left] {Service Node};
\draw (107,68) node [anchor=north west][inner sep=0.75pt]  [font=\small] [align=left] {callService(req, res)};
\draw (505,75) node [anchor=north west][inner sep=0.75pt]  [font=\small] [align=left] {callbackService(req, res)};
\draw (561,113) node [anchor=north west][inner sep=0.75pt]  [font=\small] [align=left] {sendRequest(req)};
\draw (550,162) node [anchor=north west][inner sep=0.75pt]  [font=\small] [align=left] {verdictOn(req)};
\draw (26.99,202) node   [align=left] {alt};
\draw (597,198) node [anchor=north west][inner sep=0.75pt]  [font=\small] [align=left] {if violation};
\draw (599,268) node [anchor=north west][inner sep=0.75pt]  [font=\small] [align=left] {otherwise};
\draw (109,203) node [anchor=north west][inner sep=0.75pt]  [font=\small] [align=left] {error on callService};
\draw (529,222) node [anchor=north west][inner sep=0.75pt]  [font=\small] [align=left] {publishError()};
\draw (284,279) node [anchor=north west][inner sep=0.75pt]  [font=\small] [align=left] {callService(req, res)};
\draw (301,319) node [anchor=north west][inner sep=0.75pt]  [font=\small] [align=left] {response(res)};
\draw (502,342) node [anchor=north west][inner sep=0.75pt]  [font=\small] [align=left] {sendResponse(res)};
\draw (553,411) node [anchor=north west][inner sep=0.75pt]  [font=\small] [align=left] {verdictOn(res)};
\draw (26.99,450) node   [align=left] {alt};
\draw (597,446) node [anchor=north west][inner sep=0.75pt]  [font=\small] [align=left] {if violation};
\draw (599,516) node [anchor=north west][inner sep=0.75pt]  [font=\small] [align=left] {otherwise};
\draw (529,470) node [anchor=north west][inner sep=0.75pt]  [font=\small] [align=left] {publishError()};
\draw (109,456) node [anchor=north west][inner sep=0.75pt]  [font=\small] [align=left] {error on callService};
\draw (110,523) node [anchor=north west][inner sep=0.75pt]  [font=\small] [align=left] {response(res)};
\draw (779.5,140.8) node [anchor=north west][inner sep=0.75pt]  [font=\small] [align=left] {check request};
\draw (783.5,384.8) node [anchor=north west][inner sep=0.75pt]  [font=\small] [align=left] {check response};

\end{tikzpicture}

}
\caption{Service verification in ROSMonitoring 2.0 when filtering is enabled.}
\label{fig:rosmon-service-verification}
\end{figure}

\marginAF{Fixed sequence diagram.}

\subsection{Ordered topics extension} 
\label{subsection:reordering-extension}
To recover the publication order of messages in real time, ROSMonitoring 2.0 adds timestamps to each message. These timestamps are then utilised in Algorithm~\ref{alg:ordered-msgs} to propagate messages to the Oracle in their original publication order. Moreover, this approach is based on the following assumption.
\marginAF{Algorithm 1 is now mentioned at the beginning of Section 4.2.}

\begin{assumption}\label{assumption:message-order-single-topic}
Messages on each single topic arrive at subscribers in the order of publication.
\end{assumption}

\begin{center}
{\small
\begin{algorithm2e}[!ht]
    \SetKwInOut{Input}{Input}
    \SetKwInOut{Output}{Output}
    \SetKwFunction{addToBuffer}{addToBuffer}
    \SetKwFunction{sendEarliestMessageToOracle}{sendEarliestMessageToOracle}
    \SetKwProg{Fn}{Function}{:}{}
    \Input{
    \newline
    \begin{tabularx}{\textwidth}{l p{6cm}}
    $\mathit{msg}$: & a ROS message received by the monitor on a topic \textit{t} \\
    $\mathit{ws}$: & a global websocket \\
    $\mathit{buffer}$: & a global dictionary mapping each topic to a list of timestamps of unprocessed
    messages published on that topic \\
    $\mathit{messages}$: & a global dictionary mapping publication timestamp to corresponding message\\
    \end{tabularx}
    }
    \Output{Verdict published by Oracle based on messages in the order of publication time}
    \vspace{.2cm}
    \Fn{\sendEarliestMessageToOracle{}}
    {
    using $\mathit{buffer}$ and $\mathit{messages}$ \vspace{.3cm}\\
    \begin{tabularx}{\textwidth}{l l p{6cm}}
    $\mathit{min\_time\_stamp}$ & $=$ & minimum timestamp in $\mathit{messages}$ dictionary\\
    $\mathit{message}$ & $=$ & $\mathit{messages[min\_time\_stamp]}$\\
    \end{tabularx}
    send \textit{message} to Oracle\\
    $\mathit{verdict =}$ Oracle's response\\
    remove $\mathit{min\_time\_stamp}$ from $\mathit{buffer}$\\
    remove $\mathit{message}$ from $\mathit{messages}$\\
    Publish $\mathit{verdict}$ \\
    
    }
     \vspace{.2cm}
    \Fn{\addToBuffer{msg, t}}
    {
     using $\mathit{ws}$, $\mathit{buffer}$, and $\mathit{messages}$\vspace{.3cm}\\

    $\mathit{time\_stamp\_of\_msg = getTime(msg)}$\\
    add $\mathit{time\_stamp\_of\_msg}$ to $\mathit{buffer[t]}$\\
    $\mathit{messages[time\_stamp\_of\_msg] = msg}$

    lock websocket $\mathit{ws}$\\
    \While{no topic has an empty buffer}{
    
    \sendEarliestMessageToOracle()\\
    
    }
    unlock websocket \textit{ws}

    }
\caption{Algorithm for propagating messages to the Oracle in the order they were published}\label{alg:ordered-msgs}
\end{algorithm2e}
}
\end{center}

In order to use the reordering feature in the ROS monitor, in the callback function for every topic, instead of propagating the message ($msg$) directly to the Oracle, Algorithm~\ref{alg:ordered-msgs} calls $\mathit{addToBuffer(msg, t)}$. Such a procedure accumulates messages from each topic into their respective buffers. Message release is withheld until all buffers contain at least one message, at which point the message with the earliest publication timestamp is released and sent to the Oracle. To prevent more than one thread to change the buffers simultaneously, we use locks to block write-access for a single thread. \\
\marginFMASMGS{brought Algorithm 1 ahead of its explanation and Lemma 1. This also fixed the spacing issue on the previous version.}

\marginLD{I'd write some more here to motivate the idea behind the algorithm in terms of buffering the messages on each topic being monitored and only releasing messages when there is at least one message in each buffer.  Expecting a reader to intuit this from the pseudocode is a lot}\marginMGS{Added a couple of sentences.}

\begin{lemma}\label{lemma:buffer-order}
    In Algorithm~\ref{alg:ordered-msgs}, messages in each buffer maintain the order of publication timestamps.
\end{lemma}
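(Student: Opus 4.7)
The plan is to argue by induction on the sequence of buffer-modifying operations (insertions via \texttt{addToBuffer} and deletions via \texttt{sendEarliestMessageToOracle}), maintaining the invariant that, for every topic $t$, $\mathit{buffer}[t]$ is a list of timestamps in non-decreasing order of publication time. The base case is trivial since every buffer starts empty.

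For the inductive step I would split on the kind of operation. For an insertion, observe that \texttt{addToBuffer(msg,\,t)} only ever appends to $\mathit{buffer}[t]$. Because ROS delivers each topic's messages to its subscriber through a single callback queue, invocations of \texttt{addToBuffer} for a fixed $t$ are serialised in the order of arrival on $t$; by Assumption~\ref{assumption:message-order-single-topic}, that arrival order coincides with the publication order. Hence the timestamp being appended is at least as large as every timestamp already present in $\mathit{buffer}[t]$, so monotonicity is preserved. Other buffers $\mathit{buffer}[t']$ with $t'\neq t$ are untouched and thus remain ordered by the inductive hypothesis. For a deletion, \texttt{sendEarliestMessageToOracle} removes the entry with the minimum timestamp across all buffers. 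By the inductive hypothesis each buffer is already sorted, so this minimum sits at the head of whichever buffer contains it; removing a head element of a sorted list leaves a sorted list, and leaves all other buffers unchanged.

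The only subtlety I foresee is concurrency: the append in \texttt{addToBuffer} happens \emph{before} the websocket lock is acquired, so in principle two callbacks on different topics could append in parallel, and the deletion loop could run concurrently with an insertion. However, the invariant is \emph{per-buffer}, so the parallel appends affect disjoint buffers and do not interact, and the lock ensures that no deletion occurs while another thread is inside the release loop. The remaining case of an insertion on topic $t$ racing with an ongoing deletion only matters if the deletion could remove from $\mathit{buffer}[t]$ a timestamp greater than the one being inserted; but any timestamp eligible for deletion is already the global minimum at the time the lock was taken, while the newly inserted timestamp is necessarily greater than all earlier arrivals on $t$, so ordering within $\mathit{buffer}[t]$ is still respected.

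The main obstacle is therefore making the concurrency argument precise enough to be convincing without formalising a full memory model. I would handle this by explicitly assuming atomic list append/removal primitives (as provided by Python's built-in list operations under the GIL) and noting that the websocket lock only needs to serialise the release loop, not each individual insertion, because the per-topic ordering invariant is preserved by each individual append in isolation.
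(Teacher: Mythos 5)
Your proof is correct, and its core step is the same as the paper's: Assumption~\ref{assumption:message-order-single-topic} guarantees that, for a fixed topic $t$, the calls to \texttt{addToBuffer} occur in publication order, and since \texttt{addToBuffer} only ever appends to $\mathit{buffer}[t]$, the buffer stays sorted. The packaging, however, is genuinely different. The paper argues by contradiction on a single pair of allegedly out-of-order messages $m_1, m_2$ in one buffer and stops there; it never explicitly considers what deletions do to the buffers, nor whether concurrent callbacks could interleave harmfully. Your induction on the sequence of buffer-modifying operations makes the per-buffer sortedness invariant explicit, adds the (easy but necessary) observation that removing the globally minimal timestamp deletes a head element of a sorted list and therefore preserves sortedness, and confronts the fact that the append in \texttt{addToBuffer} happens before the websocket lock is taken, so appends on distinct topics can race with the release loop. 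What the paper's version buys is brevity: for a purely sequential reading of the pseudocode, the two-message contradiction already captures everything. What yours buys is a correctness argument that actually matches the concurrent implementation, at the price of an extra explicit hypothesis (atomicity of the list append/removal primitives) that the paper leaves implicit. Your handling of the insertion-versus-deletion race is sound: the deleted timestamp is the global minimum at lock time, while a freshly appended timestamp on $t$ exceeds everything previously appended to $\mathit{buffer}[t]$, so the per-buffer order cannot be disturbed. If anything, your version exposes an assumption the paper should arguably have stated.
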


\begin{proof}
    Suppose there is a topic $t$ such that its corresponding list in dictionary $\mathit{buffers}$ is out of order. Without loss of generality, assume there are two messages $m_1$ and $m_2$ on topic $t$ with $m_1$ published before $m_2$ but stored in $\mathit{buffers[t]}$ in reverse order, \textit{i.e.} $[.. , m_2, m_1, ..]$. Due to Assumption~\ref{assumption:message-order-single-topic}, since both $m_1$ and $m_2$ are on the same topic, they are received by the ROS monitor in the correct order. Therefore, $\mathit{addToBuffer(m_1, t)}$ is called before $\mathit{addToBuffer(m_2, t)}$. Consequently, $m_1$ is appended to the list $\mathit{buffers[t]}$ before $m_2$. This contradicts our assumption that $m_2$ is stored before $m_1$ in $\mathit{buffers[t]}$. Thus, it follows, by contradiction, that Lemma~\ref{lemma:buffer-order} holds. 
\end{proof}

\begin{theorem}
    Algorithm~\ref{alg:ordered-msgs} propagates messages to the Oracle in the order of their publication.
\end{theorem}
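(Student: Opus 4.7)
The plan is to argue by contradiction, leveraging Lemma~\ref{lemma:buffer-order} together with Assumption~\ref{assumption:message-order-single-topic}. Suppose, for contradiction, that there exist two messages $m_1$ and $m_2$ whose publication timestamps satisfy $\tau(m_1) < \tau(m_2)$, yet $m_2$ is forwarded to the Oracle before $m_1$. Focus on the moment at which $m_2$ is removed from $\mathit{messages}$ inside $\mathtt{sendEarliestMessageToOracle}$: at that point every topic has a non-empty buffer (this is the loop guard in $\mathtt{addToBuffer}$), and $\tau(m_2)$ is the minimum among all timestamps currently stored in $\mathit{messages}$.

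The argument then splits into two cases based on the topic of $m_1$. First, if $m_1$ and $m_2$ are published on the same topic $t$, Lemma~\ref{lemma:buffer-order} gives that the entries of $\mathit{buffer[t]}$ are ordered by publication timestamp, so $m_1$ would necessarily sit before $m_2$ in that buffer and be selected earlier, contradicting the assumption that $m_2$ is released first. Second, if $m_1$ lives on a different topic $t'$, I would enumerate three sub-cases for its status at the moment $m_2$ is released: (i) $m_1$ has already been sent to the Oracle, which directly contradicts the assumption that $m_2$ was sent first; (ii) $m_1$ is currently in $\mathit{buffer[t']}$, in which case the presence of a message with timestamp $\tau(m_1) < \tau(m_2)$ in $\mathit{messages}$ contradicts the minimality of $\tau(m_2)$; or (iii) $m_1$ has not yet been received by the monitor, in which case Assumption~\ref{assumption:message-order-single-topic} guarantees that no message published on $t'$ after $m_1$ has arrived either, forcing $\mathit{buffer[t']}$ to be empty and contradicting the loop guard that every buffer is non-empty when a message is released.

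Since each case yields a contradiction, no such inversion can occur, which establishes the theorem. I expect the main subtlety to be sub-case (iii): one must use Assumption~\ref{assumption:message-order-single-topic} to exclude the possibility that some later message on $t'$ has arrived while $m_1$ itself is still in transit, as this is the only place where cross-topic reasoning depends on the single-topic in-order guarantee. A minor housekeeping step will be to note that the websocket lock in $\mathtt{addToBuffer}$ serialises the select-and-release steps, so that the snapshot of $\mathit{messages}$ used above is well-defined and the ``minimum timestamp'' is meaningful even under concurrent callbacks.
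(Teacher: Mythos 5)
Your proposal follows essentially the same strategy as the paper's proof: argue by contradiction, split on whether $m_1$ and $m_2$ share a topic, and in the cross-topic case combine the ``all buffers non-empty'' loop guard with the minimum-timestamp selection rule. Your version is in fact more careful than the paper's, which simply asserts that ``both $m_1$ and $m_2$ must be present in their corresponding buffers'' without justifying why $m_1$ cannot still be in transit; your sub-cases (i)--(iii) fill exactly that gap.

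One small misstep in sub-case (iii): if $m_1$ has not yet been received, Assumption~\ref{assumption:message-order-single-topic} only rules out the arrival of messages published on $t'$ \emph{after} $m_1$; messages published on $t'$ \emph{before} $m_1$ may well have arrived and still be sitting unprocessed in $\mathit{buffer[t']}$, so you cannot conclude that the buffer is empty. The contradiction still goes through, just by a different route: either $\mathit{buffer[t']}$ is empty (contradicting the loop guard, as you say), or it contains some message $m'$ that arrived before $m_1$ and hence, by Assumption~\ref{assumption:message-order-single-topic}, was published before $m_1$, giving $\tau(m') < \tau(m_1) < \tau(m_2)$ and contradicting the minimality of $\tau(m_2)$. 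With that two-pronged version of sub-case (iii), your argument is complete and is a strictly tighter rendering of the paper's proof.
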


\begin{proof}
In order to prove that Algorithm~\ref{alg:ordered-msgs} is correct, we assume the opposite, namely that two messages, $m_1$ and $m_2$, were propagated to the Oracle in reverse order of their publication timestamps. Without loss of generality suppose $m_1$ was published earlier than $m_2$ but was propagated to the Oracle after $m_2$. 

If the messages are on the same topic then, due to Assumption~\ref{assumption:message-order-single-topic}, we reach a contradiction which means our assumption is incorrect and the proof is complete. Otherwise, the messages are on distinct topics. Hence, by construction, they are stored in separate lists in \textit{buffers}. Furthermore, the algorithm only sends messages to the Oracle if the buffers for all topics are non-empty. Therefore, both $m_1$ and $m_2$ must be present in their corresponding buffers at the time $m_2$ is propagated to the Oracle. But the algorithm, by construction, always chooses the message with the smallest timestamp to propagate to the Oracle next. This contradicts our assumption that $m_2$ is propagated before $m_1$ despite having a larger timestamp. Consequently, we can conclude, by contradiction, that Algorithm~\ref{alg:ordered-msgs} is correct.  
\end{proof}

A successful application of the ordering mechanism necessitates careful consideration to mitigate the risk of deadlocks. Specifically, when a topic $t$ undergoes filtering by the monitor, and another topic or service $x$ relies on it, both $t$ and $x$ should not be concurrently included in the ordering process. Otherwise, the buffering of a message $m_1$ on topic $t$ can lead to a deadlock scenario, as it cannot be released until an $x$ message is buffered. Conversely, an $x$ message cannot be generated until a $t$ message is published, which, in turn, cannot occur until $m_1$ is released from the buffer. Furthermore, it is essential to carefully evaluate dependencies between topics and services when determining which should be ordered based on their publication times.



\section{Experimental evaluation}

In our case study, we illustrate the practical implementation of ROSMonitoring 2.0 through a scenario involving a Battery Supervisor system for a UAV. We developed an online ROS monitor which runs alongside the system and checks its behaviour against a set of properties in real time. The experiments were conducted using \href{https://wiki.ros.org/noetic}{ROS1 Noetic distribution}. \marginFMASMGS{Commented out Humble, added two sentences to say we use online RV and Noetic ROS.}
As shown in Figure \ref{fig:example}, our case study comprises three interconnected nodes: the Battery, responsible for publishing the remaining battery percentage; the Battery Supervisor, which subscribes to the battery percentage topic and publishes status updates based on predefined thresholds; and the LED Panel, which reflects the battery status through LED lights.

The Battery node periodically broadcasts the battery percentage on the \percentage topic. Meanwhile, the Battery Supervisor node, operating at a slower rate than the Battery, subscribes to this topic and publishes status updates on the \status topic. These status updates indicate the battery status as follows: status 1 for a percentage higher than 40\%, status 2 for a percentage between 30\% and 40\%, and status 3 for a percentage between 0\% and 30\%. Since the Battery Supervisor has a slower publication rate, it may not report the status for every percentage published by the Battery. This is intentional to ensure that while the battery status is reported regularly, energy usage and communications are optimised. 

As shown in Figure \ref{fig:casestudy}, to facilitate monitoring system behaviour, supplementary topics and a service are added to the original example in Figure \ref{fig:example}. For instance, to ensure synchronisation between the Battery and the Battery Supervisor, an additional topic \inputAcc is introduced. This topic tracks which battery percentage messages have been processed by the Battery Supervisor.
Furthermore, the Battery Supervisor publishes a message on topic \statChange if the battery status changes. The Battery Supervisor node subscribes to the \statChange topic itself to separate the processing of \percentage messages from the invocation of service call to \led. As explained further below, this is a workaround to prevent deadlocks when using the ordering mechanism. Upon detecting a change in status by comparing the current status with the previous one, the Battery Supervisor publishes a \statChange message. Once the Battery Supervisor receives a \statChange message, it calls the \ledmon service to update the LED lights on the LED Panel accordingly. After receiving a \ledmon service request, the ROS monitor checks that the request is valid and it calls the \led service. \marginLD{Latex macro weirdness - make sure there isn't a space before the comma} Upon receiving a service call, the LED Panel publishes a message on the \statAcc topic to record which status update was acknowledged, followed by a message on the \mait{/LED\_panel} topic reporting the current state of the LED lights (green, yellow, and red). The LED Panel also sends a response to the ROS monitor which is relayed back to the Battery Supervisor.  

Subscribing to the \statChange topic may appear peculiar for the Battery Supervisor, which publishes it. While it might seem more straightforward to invoke the \led service where the status calculation occurs based on received \percentage messages, such an approach risks deadlock. The reason is that the buffer is unable to release a service request message until accepting the next percentage message. But no further percentage messages can be accepted until a service response is received and that can only happen if the service request is released. To resolve this, we separated the publication of topics from the function which initiates service requests so that the service does not block the receipt of messages needed for producing a response. 

\begin{figure}[!ht]
    \centering
    \resizebox{.8\textwidth}{!}{
    \begin{tikzpicture}[
    node/.style={rectangle, rounded corners=3mm, draw, minimum size=1cm, inner sep=10pt},
    tallnode/.style={rectangle, rounded corners=3mm, draw, minimum size=1cm, inner sep=10pt, minimum height=1.5cm},
    topic/.style={draw, minimum width=1.5cm, minimum height=.5cm, inner sep=2pt},
    arrow/.style={-Stealth, thick},
    dashedarrow/.style={-Stealth, dashed, thick},
    doublearrow/.style={thick, >=Stealth, <->}
    ]

    \node[node] at (0,0) (battery) {Battery};
    \node[topic] at (0, -2) (percentage) {\textbackslash battery\uline{0.2cm}percentage};
    \node[node] at (5, -2)  (batterymon) {Battery Supervisor};
    \node[topic] at (10,-1.25) (inputacc) {\textbackslash input\uline{0.2cm}accepted};
    \node[topic] at (10,-2) (batterystat) {\textbackslash battery\uline{0.2cm}status};
    \node[topic] at (10,-3) (statuschange) {\textbackslash status\uline{0.2cm}change};
    \node[tallnode] at (5,0) (rosmon) {ROS Monitor};
    \node[node] at (5,2) (ledpanel) {LED Panel};
    \node[topic] at (10,2.75) (ledpaneltopic) {\textbackslash LED\uline{0.2cm}Panel};
    \node[topic] at (2, 0) (verdict) {\textbackslash verdict};
    \node[topic] at (10,2) (statusacc) {\textbackslash status\uline{0.2cm}accepted};



    \draw[arrow] (batterymon)+(.2,.5) |- (inputacc);
    \path[arrow, draw, transform canvas = {yshift=-.4cm}] (inputacc.east)+(0,0.4) -| ($(inputacc)+(1.75,1.25)$) |- (rosmon.east);
    \draw[arrow] (batterymon) -- (batterystat);
    \path[arrow, draw, transform canvas = {yshift=-.2cm}] (batterystat.east)+(0,.2) -| ($(batterystat)+(2,2)$) |- (rosmon.east);
    \draw[doublearrow] (batterymon)+(.2,-.5) |- (statuschange) node[midway, above] {};
    \draw[arrow] (battery) -- (percentage) node[midway, above] {};
    \draw[arrow] (percentage) -- (batterymon) node[midway, above] {};
    \draw[arrow] (rosmon) -- (verdict) node[midway, above] {};
    \draw[arrow] (ledpanel.north) |- ($(ledpaneltopic)+(-2,0)$) -- (ledpaneltopic.west);
     \draw[arrow] (ledpanel)-- (statusacc);
    \path[arrow, draw, transform canvas = {yshift=0.2cm}] (ledpaneltopic.east)+(0,-0.2) -| ($(ledpaneltopic)+(2.25,-2)$) |- (rosmon.east);
    \path[arrow, draw, transform canvas = {yshift=0.4cm}] (statusacc.east)+(0,-0.4) -| ($(statusacc)+(2,-1.75)$) |- (rosmon.east);
    \draw[dashedarrow] (batterymon) -- node[midway, left] {\textbackslash SetLED\uline{0.2cm}mon invocation} (rosmon);
    \draw[dashedarrow] (rosmon) -- node[midway, left] {\textbackslash SetLED invocation} (ledpanel);

\end{tikzpicture}
    }
    \captionof{figure}{Case study with ROS Monitor and additional topics \inputAccp, \statAccp, \statChangep, \verdictp and service \ledmonp.}
    \label{fig:casestudy}
\end{figure}


\marginLD{I think I'm unclear why this has to be a callback - can't this just be a topic like normal with only SetLED as a callback?}\marginMGS{I think we should discuss this to clarify what we mean.}
\medskip


\noindent The properties we selected to verify are as follows with formal definitions in Table \ref{tab:properties}:
\begin{enumerate}
    \item \textbf{Topic only:} Correspondence of \status with \percentage and \inputAcc:
    \begin{enumerate}
    \item Every \status message corresponds to a \inputAcc message and correctly reports the status based on its corresponding \\ 
    \percentage message.\label{property:topic-1}
    \item Every \inputAcc message is followed by a \status message within 100 time steps. \label{property:topic-2}
    \end{enumerate}
    \item \textbf{Topic and Service:} Correspondence of \led service request with\\ \status:
    \begin{enumerate}
    \item Every \led service request corresponds to a \status message and a change in battery status. \label{property:topic-service-1}
    \item Every change of status reported via \status messages is followed by a \led service request within 100 time steps. \label{property:topic-service-2}
    \end{enumerate}
    \item \textbf{Service only:} Correspondence of \led service request and response:
    \begin{enumerate}
    \item Every \led service response corresponds to a \led service request. \label{property:service-1}
    \item Every \led service request is followed by a \led service request within 100 time steps. \label{property:service-2}
    \end{enumerate}    
\end{enumerate}

\begin{table}[t]
    \centering
    \resizebox{0.8\textwidth}{!}{
    \begin{tabular}{|l |l|}\hline
    Property ID & Property formal specification\\\hline
        \ref{property:topic-1}  & forall[i]. (forall[s]. \{topic: ``\statusp'', id: *i, status: *s\}\\
        & $\rightarrow$  once(\{topic: ``\inputAccp'', id: *i\}) and \\
          & ~~~~once(\{topic: ``\percentagep'', id: *i, percentage: *s\}))\\\hline
        \ref{property:topic-2} & forall[i]. not (\{topic: ``\statusp'', id: *i\})\\ 
        & $\rightarrow$  once[1:]\{topic: ``\statusp'', id: *i\} or not (once[100:](\{topic: ``\inputAccp'', id: *i\}))\\\hline
        \ref{property:topic-service-1} & forall[i]. (forall[s]. \{service: ``\ledp'', req\_id: *i, req\_status: *s\}\\ 
        & $\rightarrow$ once(\{topic: ``\statusp'', id: *i, status: *s, status\_change: True\})\\\hline
        \ref{property:topic-service-2} & forall[i]. not (\{service: ``\ledp'', req\_id: *i, req\_status: *s\})\\ 
        & $\rightarrow$ once[1:] \{service: ``\ledp'', req\_id: *i, req\_status: *s\} \\
        & ~~~ or not (once[100:](\{topic: ``\statusp'', id: *i, status: *s, status\_change: True\})\\\hline
        \ref{property:service-1} & forall[i]. \{service: ``\ledp'', response: True, res\_id: *i\}\\ 
        & $\rightarrow$ once(\{service: ``\ledp'', request: True, req\_id: *i\})\\\hline
        \ref{property:service-2} & forall[i]. not (\{service: ``\ledp'', response: True, res\_id: *i\}) \\
        & $\rightarrow$ once[1:] \{service: ``\ledp'', response: True, res\_id: *i\} \\
        & ~~~ or not (once[100:](\{service: ``\ledp'', request: True, req\_id: *i\}))\\\hline
    \end{tabular}
    }
    \captionof{table}{Properties in Past Metric Temporal Logic (Past MTL) according to the Reelay Expression Format (\url{https://doganulus.github.io/reelay/rye/}).}
    \label{tab:properties}
\end{table}

\noindent Formalisation of these properties requires definition of predicates, summarised in Table \ref{tab:predicates}, based on the JSON\marginLD{I think this is the first mention of JSON}\marginMGS{Added a line of introduction to JSON in \ref{sec:oracle}} messages sent to the Oracle. For topics, generic predicates $\mathit{topic}$ and $\mathit{id}$ are defined. Additionally, for the \percentage topic, predicate $\mathit{percentage}$ is defined, taking values $1$ for percentages between 40\% and 100\%, $2$ for percentages between 30\% and 40\%, $3$ for percentages between 0\% and 30\%, and `$\mathit{INVALID}$' for other values. Consistently, the $\mathit{percentage}$ and $\mathit{status}$ predicates share values to enable referencing, as seen in Property~\ref{property:topic-1}. For the \status topic, predicate $\mathit{status}$ holds the String version of message status, with `$\mathit{INVALID}$' assigned if the status is not $1$, $2$, or $3$. Additionally, predicate $\mathit{status\_change}$ is defined to be `True' if the corresponding field in the message is `true'. Note that the monitor is not subscribed to the $\mathit{/status\_change}$ topic which is used to trigger an LED Panel response. The reason for this redundancy is deadlock prevention. If the $\mathit{/status\_change}$ topic was ordered, then its release would be contingent on a service request joining the buffers which cannot happen unless the $\mathit{/status\_change}$ message is released. Our solution to this potential deadlock was to keep the $\mathit{/status\_change}$ topic unordered and add a field $\mathit{status\_change}$ to the $\mathit{battery\_status}$ topic for the Oracle to determine if the LED panel is responding correctly. Such redundancies could be considered as a general technique to prevent deadlocks. For the \led service, predicates $\mathit{request}$ and $\mathit{response}$ indicate message type. For \led service request messages, predicates $\mathit{req\_id}$ and $\mathit{req\_status}$ store additional information used in Properties~\ref{property:topic-service-1} and~\ref{property:topic-service-2} to verify legitimate requests triggered by corresponding \status messages.

\marginLD{This paragraph is a bit wall of text - do we have space to separate it out into a list with an item for each predicate?} \marginMGS{Added a summery in Table \ref{tab:predicates}} 

\begin{table}[t]
    \centering
    \resizebox{0.8\textwidth}{!}{
    \begin{tabular}{|l|l|l|}\hline
       Message Type  & Predicate & Description \\\hline
       \percentage &  $\mathit{topic}$ & Topic of the message, i.e. \percentage\\\cline{2-3}
               & $\mathit{id}$ & Unique sequentially assigned ID for the message\\\cline{2-3}
                 & $\mathit{percentage}$ & `1' if percentage $>$ 40 and percentage $\leq$ 100\\
                & & `2' if percentage $>$ 30 and percentage $\leq$ 40\\
                & & `3' if percentage $\leq$ 30 and percentage $\geq$ 0\\
                & & `INVALID' if percentage $<$ 0 or percentage $>$ 100\\\hline
        \status  &  $\mathit{topic}$ & Topic of the message, i.e. \status\\\cline{2-3}
               & $\mathit{id}$ & The ID of the corresponding percentage message\\\cline{2-3}
               & $\mathit{status}$ & '0' if status is 0\\
        & & `1' if status is 1\\
        & & `2' if status is 2\\
        & & `3' if status is 3\\
        & & `INVALID' if status is not 0, 1, 2, or 3\\\cline{2-3}
        &  $\mathit{status\_change}$ & `True' if and only if the corresponding field is `true'\\\hline
        \led & $\mathit{request}$ & `True' if and only if the message is a service request for \led\\\cline{2-3}
        & $\mathit{req\_id}$ & ID of the corresponding \status message \\\cline{2-3}
        & $\mathit{req\_status}$ & Status of the corresponding \status message\\\cline{2-3}
        & $\mathit{response}$ & `True' if and only if the message is a service response for \led\\\hline
    \end{tabular}
    }
    \captionof{table}{Predicates construction based upon JSON messages sent to Oracle.}
    \label{tab:predicates}
\end{table}

A significant concern associated with the implementation of runtime verification is its potential adverse effect on overall system performance. To gauge the extent of overhead induced by monitoring services, we modified the client node to measure the time elapsed between dispatching a \led service request and receiving the corresponding response. All experiments shown in Figures~\ref{topic-errorbars}, \ref{fig:service-plot-monitoring}, and~\ref{tab:service-plots-ordering} were conducted over 10 runs, with the results averaged. Each run was terminated once the battery percentage reached zero. Frequencies for the Battery, Battery Supervisor, and LED Panel were set to 25, 10, and 35 Hertz respectively. The mean and standard deviation are reported for each experiment.\marginAF{Added note on experiments being sampled on 10 runs.}\marginFMASMGS{Moved up the setup information for experiments and added frequencies.} As illustrated in Figure \ref{fig:service-plot-monitoring}, the overhead incurred by monitoring \led without ordering appears negligible, but the introduction of ordering substantially delays the process, particularly noticeable during the last status change. To approximate the overhead attributed to the ordering mechanism, we adapted the monitor code to record the time difference between message buffering and transmission to the Oracle (reported in Figure~\ref{tab:service-plots-ordering}). The results depicted in Figure \ref{topic-errorbars} exhibit a similar trend, wherein the release time deviates from the buffering time until the second service request, after which the waiting time stabilises at a minimised level. This is because after the initial service request and response at the beginning of the execution, messages keep accumulating in the buffers since the \led service buffer remains empty until the second service request is triggered by battery percentage reaching 40\%. This threshold is reached when the message with ID 60 is sent. With a service request in the corresponding buffer, the algorithm proceeds to release messages from buffers in the order of publication. Since the next service request is triggered when the battery percentage reaches 30\%, i.e. message ID 70, the service buffer will not be empty for long. This explains the minimal waiting time between buffering and transmission after the 60th message. Once the battery percentage reaches 0, the execution is interrupted. This allows the remaining messages in the buffers to be released in the order of publication without requiring all buffers to be nonempty. Moreover, as shown in Figure \ref{tab:service-plots-ordering}, the waiting time for \led service requests escalates over time, whereas response messages appear to be promptly released from the buffer. This is because at the time that a service request is buffered, a number of messages have accumulated in the other buffers which need to be released before the service request. In contrast, since the service response is buffered shortly after the corresponding service request is released, there are only a few messages buffered in between which need to be released before the service response.  

\begin{table}[!ht]
    \centering
    \resizebox{1\textwidth}{!}{
    \begin{tabular}{l}
        
    \includegraphics{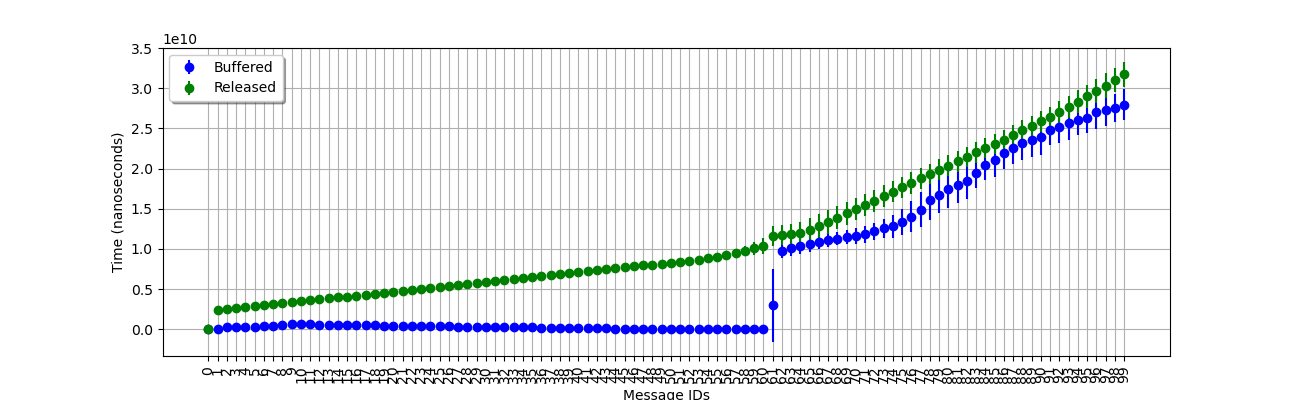}
    
         \\
   
    \includegraphics{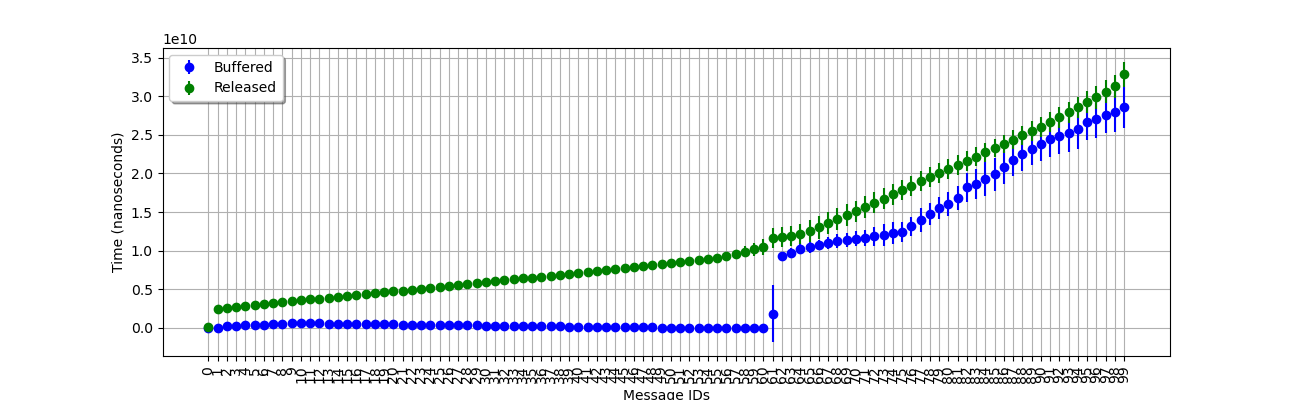}
        
         \\
    \includegraphics{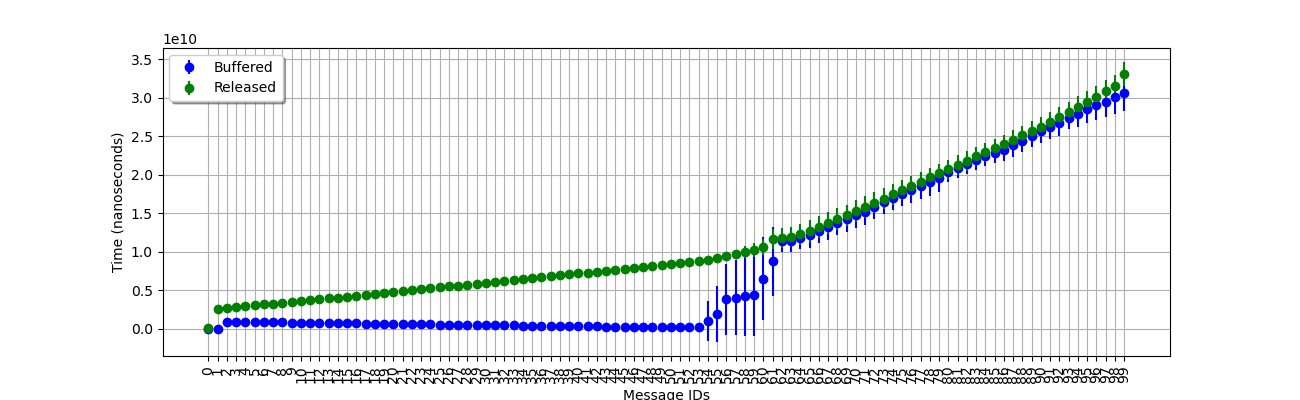}
        
         \\
         
    \end{tabular}
    }
    \captionof{figure}{Mean and standard deviation of time (nanoseconds) taken for monitor with ordering to buffer and release messages on \percentage (top), \inputAcc (middle), ans \status (bottom) topics since publication. Data shown for 10 runs.}
    \label{topic-errorbars}
\end{table}

With regards to verification accuracy, monitoring with ordering consistently yielded accurate verdicts without any incorrect assessments. Conversely, in cases where monitoring excluded ordering, \status messages often reached the monitor prior to their corresponding \percentage or \inputAcc signals. Similarly, service request and response messages consistently preceded the corresponding \statChange signals. These out-of-order message arrivals led to frequent false negative verdicts in each run. Hence, in weighing the trade-off between performance and accuracy, it becomes evident that monitoring with ordering is most suitable for safety-critical systems where time sensitivity is not paramount. On the other hand, monitoring without ordering may offer enhanced performance at the expense of accuracy, making it more suitable for scenarios where real-time constraints are less stringent.

\begin{figure}[!ht]
    \centering
\resizebox{0.5\textwidth}{!}{
    \includegraphics{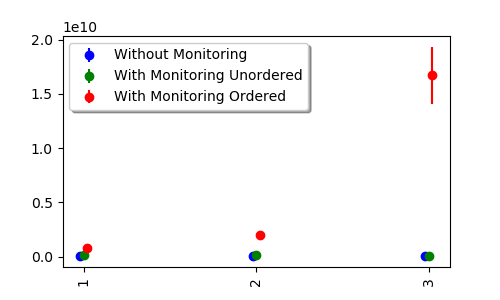}
    }
    \captionof{figure}{Mean and standard deviation of time (nanoseconds) taken for \led service request to receive a response (reported on three service calls). Data shown for 10 runs without monitoring, with monitoring excluding ordering, and with monitoring including ordering.}
    \label{fig:service-plot-monitoring}
\end{figure}




\begin{table}[t]
    \centering
    \resizebox{1\textwidth}{!}{
    \begin{tabular}{ll}
        
    \includegraphics{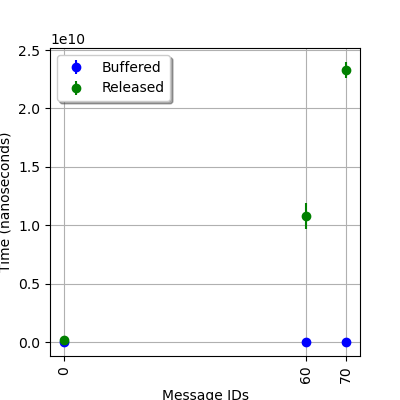}
    
         &  
   
    \includegraphics{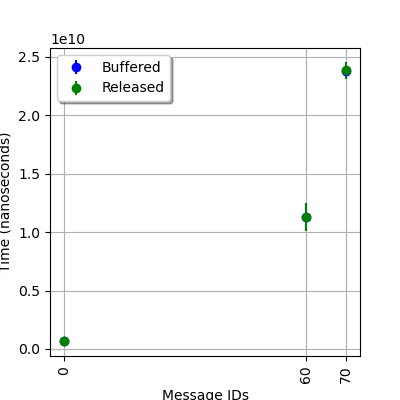}
        
         \\
    \end{tabular}
    }
    \captionof{figure}{Mean and standard deviation of time (nanoseconds) taken for monitor with ordering to buffer and release \led request (left) and response (right) messages since generation. Time for buffering and releasing overlap for response messages in the right plot. Data shown for 10 runs.}
    \label{tab:service-plots-ordering}
\end{table}




\section{Related Work}

In this section, we discuss the most recent approaches to RV of ROS and position them in relation to ROSMonitoring 2.0.

ROSRV~\cite{DBLP:conf/rv/HuangEZMLSR14} shares similarities with our framework in achieving automatic RV of applications in ROS. Both tools utilise monitors not only to passively observe but also to intercept and handle incorrect behaviours in message exchanges among nodes. The main difference lies in how they integrate the monitor into the system. ROSRV replaces the ROS Master node with RVMaster, directing all node communication through it and establishing peer-to-peer communication with the monitor as the intermediary. In contrast, ROSMonitoring adds the monitor through node instrumentation without altering the ROS Master node. Additionally, the new and extended version ROSMonitoring 2.0 presented in this paper further differentiates the two frameworks, as ROSRV does not support the verification of services or the customisation of the order of topics. HAROS~\cite{DBLP:conf/iros/SantosCML16} is a framework dedicated to ensuring the quality of ROS systems. Although HAROS primarily focuses on static analysis, it possesses the capability to generate runtime monitors and conduct property-based testing. Differently from ROSMonitoring 2.0, HAROS does not support ROS2 (not even partially). Furthermore, one notable distinction between ROSMonitoring 2.0 and HAROS is that our specifications do not incorporate ROS-specific details, and the process for generating monitors does not rely on understanding the topology of the ROS graph. DeROS~\cite{DBLP:conf/simpar/AdamLJS14} is a domain-specific language and monitoring system tailored for ROS. Although DeRoS's language incorporates explicit topic notions, it lacks native support for reordering or service handling. Moreover, it is exclusively compatible with ROS. An extension of Ogma supports the runtime monitoring of ROS2 applications~\cite{DBLP:journals/corr/abs-2209-14030}. It outlines a formal approach to generate runtime monitors for autonomous robots from structured natural language requirements, expressed in FRET~\cite{DBLP:conf/refsq/GiannakopoulouP20a}. This extension integrates FRET and Copilot~\cite{copilot} via Ogma to translate requirements into temporal logic formulas and generate monitor specifications. Unlike ROSMonitoring 2.0, which focuses on monitoring and potentially filtering topics and services, this extension is limited to detecting and reporting violations only. Nonetheless, \cite{DBLP:conf/refsq/GiannakopoulouP20a} provides a lightweight verification solution for complex ROS2 applications, ensuring safe operation. MARVer~\cite{DBLP:conf/taros/DegirmenciKOBKOY23} is an integrated runtime verification system designed to ensure the safety and security of industrial robotic systems. It offers a lightweight yet effective approach to monitoring the behaviour of robotic systems in real-time, enabling the detection of security attacks and potential safety hazards. By being based on ROSMonitoring, MARVer can leverage the new features introduced in this work.

The work in~\cite{DBLP:conf/ictac/BegemannKLS23} introduces TeSSLa-ROS-Bridge, a RV system designed for robotic systems built in ROS. Unlike other RV approaches, TeSSLa-ROS-Bridge utilises Stream-based Runtime Verification (SRV), which specifies stream transformations to detect errors and control system behaviour (currently supported in ROSMonitoring as well). The system allows TeSSLa monitors to run alongside ROS-based robotic systems, enabling real-time monitoring. Compared to ROSMonitoring, which focuses on monitoring and filtering topics and services, TeSSLa-ROS-Bridge offers a different approach by leveraging stream-based runtime verification to monitor and control robotic systems. RTAMT is an online monitoring library for Signal Temporal Logic (STL), supporting both discrete and dense-time interpretations. In~\cite{DBLP:conf/atva/Nickovic020}, RTAMT4ROS is introduced, integrating RTAMT with ROS. This integration enables specification-based RV methods in robotic applications, enhancing safety assurance in complex autonomous systems. However, similar to other RV frameworks, RTAMT4ROS solely supports topics monitoring and relies exclusively on ROS. Alternative runtime monitoring systems such as Lola~\cite{DBLP:conf/time/DAngeloSSRFSMM05}, Java-MOP~\cite{DBLP:conf/tacas/ChenR05}, detectEr~\cite{DBLP:journals/scp/AcetoAAEFI24}, Hydra~\cite{DBLP:conf/atva/RaszykBT20}, DejaVu~\cite{DBLP:conf/cpsweek/HavelundPU18}, LamaConv~\cite{lamaconv}, and TraceContract~\cite{DBLP:conf/fm/BarringerH11} could potentially be applied to robotics applications. However, these systems are not explicitly designed for ROS, and integrating them into ROS would require additional development effort and potentially incur runtime costs.

\section{Conclusions and Future Work}


This paper introduces ROSMonitoring 2.0, an extension of the ROSMonitoring framework designed to enable the Runtime Verification of robotic applications developed in ROS. ROSMonitoring 2.0 expands upon its predecessor by facilitating the verification of services, in addition to topics, and by accommodating ordered topics, rather than solely unordered ones. Notably, the new features of ROSMonitoring 2.0 do not necessitate changes to the compositional and formalism-agnostic aspects of ROSMonitoring; only the synthesis of ROS monitors is adjusted. This approach not only leverages all existing features in ROSMonitoring but also ensures full backward compatibility with existing ROS applications based on ROSMonitoring. Furthermore, the proposed ordering algorithm and service interception process hold applicability beyond the scope of ROSMonitoring 2.0, potentially benefiting other systems as well. \marginLD{I would stress that some of this work is not dependent upon ROSMonitoring - e.g., the ordering algorithm could be used by other systems and the process of intercepting services could also be adapted to other systems - hence the work presented here is of general use to other researchers and not just a system description of ROSMonitoring 2.0}\marginMGS{Added a sentence to address this.}

It is also worth noting that the introduction of ordering messages according to the order they are published does not mutually exclude the standard ROSMonitoring topic checking, based on the order the messages are received. In this sense, in ROSMonitoring 2.0 it is also possible to combine both ordering features to monitor both the publish and receive order of messages. This becomes relevant in scenarios where it is necessary to identify which exact node is the faulty one, rather than being only interested in checking the presence of a property violation (which could be relevant in other scenarios instead).

As a future direction, we aim to formally verify that our case study is deadlock-free and establish design principles for ensuring deadlock-freeness. Additionally, threading will be explored as an alternative solution to address potential deadlock issues. We also plan to extend our research to additional case studies in the robotics domain, focusing on complex systems involving multiple services with strong interdependencies across services, topics, and interfaces.

Moreover, we intend to expand the framework to support ROS actions. ROS actions allow robots to execute complex, asynchronous tasks by setting goals, providing feedback, and retrieving results, thus facilitating modular and scalable behaviours for navigation, manipulation, and planning. Although actions are asynchronous and non-blocking, which reduces the monitoring burden compared to services, they introduce challenges in tracking progress against runtime goals.

In parallel, we plan to enhance our message ordering algorithm by introducing timeouts, preventing messages from waiting indefinitely, particularly in unreliable communication scenarios. However, careful consideration is required, as timeouts may disrupt the message order when delays occur, rather than message loss. This could be especially important for scenarios with strict timing requirements, where a balance must be struck between message order and timely delivery.

\marginAF{Paragraph added to discuss a bit more on scalability as a future direction.}
Furthermore, a comprehensive performance evaluation of ROSMonitoring 2.0 will be a critical focus. We aim to assess key metrics such as execution time, resource usage, and system overhead, bench-marking our approach against existing alternatives. Such an evaluation will provide deeper insights into the framework's efficiency and scalability and guide further optimisations.

\marginAF{Porting of reordering left as future direction.}
Lastly, our goal is to port all the features presented in this paper to ROS2, which currently only supports service monitoring and lacks message reordering functionality. This migration will proceed once additional evaluations and testing have been completed on the ROS1 version of ROSMonitoring 2.0.

\bibliographystyle{eptcs}
\bibliography{main}

\end{document}